\newtheorem{theorem}{Theorem}
\newtheorem{lemma}{Lemma}
\newtheorem{corollary}{Corollary}
\newtheorem{problem}{Problem}
\begin{document}

\title{\LARGE\bf Strategies to Inject Spoofed Measurement Data\\ to Mislead Kalman Filter}%
\author{Zhongshun~Zhang, 
        Lifeng~Zhou,
        and~Pratap~Tokekar
\thanks{The authors are with the Department of Electrical \& Computer Engineering, Virginia Tech, USA. \texttt{\small \{zszhang, lfzhou, tokekar\}@vt.edu}.}%
\thanks{This material is based upon work supported by the National Science Foundation under Grant \#1566247.}
\thanks{A preliminary version of this paper, without the evaluation with $\chi^2$ detector in Section~\ref{sec:detector}, was presented at ACC 2018~\cite{zhang2018strategies}.}}

\maketitle
\thispagestyle{empty}
\pagestyle{empty}

\begin{abstract}

We study the problem of designing false measurement data that is injected to corrupt and mislead the output of a Kalman filter. Unlike existing works that focus on detection and filtering algorithms for the observer, we study the problem from the attacker's point-of-view. In our model, the attacker can corrupt the measurements by injecting additive spoofing signals. The attacker seeks to create a separation between the estimate of the Kalman filter with and without spoofed signals. We present a number of results on how to inject spoofing signals while minimizing the magnitude of the injected signals. The resulting strategies are evaluated through simulations along with theoretical proofs. We also evaluate the spoofing strategy in the presence of a $\chi^2$ spoof detector. The results show that the proposed strategy can successfully mislead a Kalman filter while ensuring it is not detected. 
\end{abstract}

\section{Introduction}
As autonomous systems proliferate, there are growing concerns about their security and safety~\cite{parkinson2017cyber,thing2016autonomous}. Of particular concern is their vulnerability to signal spoofing attacks~\cite{tippenhauer2011requirements}. As a result, many researchers are designing algorithms that enable an \emph{observer} to detect and mitigate signal spoofing attacks (e.g.,~\cite{al2015design,chen2007detecting,gil2017guaranteeing,zhang2017functional,fan2017synchrophasor}). We study the problem from the opposite (i.e., the attacker's) point-of-view. Our goal is to characterize the capabilities of the attacker that is generating the spoofing signals while assuming that the observer is using a Kalman filter for state estimation. 

The problem of generating spoofing attacks has been studied specifically for GPS signals. Tippenhauer et al.~\cite{tippenhauer2011requirements} describe the requirements as well as present a methodology for generating spoofed GPS signals. Larcom and Liu~\cite{larcom2013modeling} presented a taxonomy of GPS spoofing attacks.

The typical approach to mitigate sensor spoofing attacks is by designing robust state estimators~\cite{bezzo2014attack}. Fawzi et al.~\cite{fawzi2011secure} presented the design of a state estimator for a linear dynamical system when some of the sensor measurements are corrupted by an adversarial attacker. We focus on the scenario where the observer uses a Kalman Filter (KF) for estimating the state using measurements that are corrupted by additive spoofing signals by the attackers. We study the problem of generating spoofing signals of minimum energy that can achieve any desired separation between the KF estimate with spoofing and without spoofing. We show that for many practical cases, the spoofing signals can be generated using linear programming in polynomial time. 

Many recent works have undertaken research for design spoofing data against a healthy estimation environment. Such as LQG control system~\cite{mo2010control}, GPS system~\cite{su2016stealthy}, wireless sensor networks~\cite{mo2010false} and electric power grids~\cite{liu2011false}.

In~\cite{liu2011false}, the author presents false data injection attacks, against state estimation in electric
power grids. This paper shows that an attacker can exploit the configuration of a power system to launch such attacks to successfully introduce arbitrary errors into certain state variables while bypassing existing techniques for bad measurement detection.

Another work by Su et al.~\cite{su2016stealthy} is closely related to ours. The authors show how to spoof the GPS signal without triggering a detector that uses the residual in the Kalman filter. They present a 1-step (greedy) online spoofing strategy that solves a linear relaxation of a Quadratically Constrained Quadratic Program (QCQP) at each timestep. We present a strategy that plans for $T$ future timesteps, instead of just the next timestep, while minimizing the spoofing signal energy. Furthermore, we characterize the scenarios under which our strategy finds the optimal solution in polynomial time.

The work that is most closely related to ours is by Mo et al.~\cite{mo2010control,mo2010false}. Their goal is to design false measurement data to mislead a system with Kalman filter~\cite{mo2010false} or an LQG control system~\cite{mo2010control}. Both, our work and the aforementioned work, assume that the system is linear with Gaussian noise and that a discrete Kalman filter is used to estimate the state.  However, in~\cite{mo2010false}, the objective is to design the false data to mislead a certain failure detector ($\chi^2$ failure detector). The paper gives an inner and outer approximation for a reachable set that can mislead the system while not being detected by the $\chi^2$ failure detector. 

Various failure detectors have been proposed in the literature. Jones~\cite{jones1973failure} presented one of the first work on failure detection in linear systems. They presented a linear filter that increases the sensitivity of the residual of the filter, which helps to improve the detection of a particular failure. Brumback et al.~\cite{brumback1987chi} presented a $\chi^2$ test for fault detection in Kalman filters. Mo et al.\cite{mo2010false} studied the effect of false data injection attacks on state estimation with a $\chi^2$ failure detectors. 

In this paper, we study how to design spoofing signals that is agnostic to the failure detector. Instead, we minimize the magnitude of the injected signals while still ensuring the desired separation in the filter output. We provide numerical simulations to show our strategy successfully misleads the $\chi^2$ detector. 

Based on the motion model of the target and the evolution of the KF, three problems for spoofing design are formulated in Section \ref{sec:probform}. Section \ref{sec:Strategies} shows the approaches to solve these optimization problems. The simulations for verifying spoofing strategies are given in Section~\ref{sec:simulation}. Section~\ref{sec:detector} provides a numerical example to illustrate how the proposed spoofing strategy can be applied to a system equipped with a failure detector. Finally, Section~\ref{sec:conc} summarizes the conclusion and future work. 

\section{Problem Formulation} \label{sec:probform}
\noindent\textbf{Notation:} We denote the set of positive real number by $\mathbb{R}^{+}$, the set of positive integer by $\mathbb{Z}^{+}$. The set of real vectors with dimension $n$ is denoted by $\mathbb{R}^{n},~n\in \mathbb{Z}^{+}$, and the set of real matrices with $m$ rows and $n$ columns by $\mathbb{R}^{m\times n}, ~m, n\in \mathbb{Z}^{+}$. We write $\lVert \cdot \lVert_{p}^{p}, ~p\in\mathbb{Z}^{+}$ as the $p^{th}$ power of $L_{p}$ vector norm,  $\mathbb{E}(\cdot)$ as the expectation of a random variable, $I_{n}$ as the identity matrix with size $n, ~n\in \mathbb{Z}^{+}$, and $\mathcal{N}(\mu,\sigma^{2})$ as the normal distribution with mean $\mu$  and variance $\sigma^{2}$.

We consider a scenario where an observer estimates the location of a target using a  KF in 2D plane. The target misleads the observer by adding spoofing signals to the observer's measurement. We define the target's  model as: 
\begin{equation}  \label{eqn:target_motion_model}
  x_{t+1}=\mathcal{F}x_{t}+\mathcal{G}u_t+\omega_t,
\end{equation}   
where $\mathcal{F},\mathcal{G}\in\mathbb{R}^{2\times 2}$, $x_{t}\in\mathbb{R}^2$ is the position of the target, $u_t\in \mathbb{R}^{2}$ is the  control input and $w_t\sim \mathcal{N}(0, R)$ is the Gaussian distribution,  model noise of the motion model with $R\in \mathbb{R}^{2\times 2}$. 

The observer estimates the target's position  using a linear measurement model:
\begin{equation}
  \label{eqn:realmeasurement}
   z_t= \mathcal{H}x_t +v_t,
\end{equation}
where  $\mathcal{H}\in \mathbb{R}^{2\times 2}$ and $v_t \sim \mathcal{N}(0,Q)$ gives the measurement noise with $Q\in \mathbb{R}^{2\times 2}$.

In order to mislead the observer, the target corrupts the observer's measurement by adding spoofing signal to mislead the observer's estimate. We assume the measurement received by the observer is $\tilde{z}_t\in \mathbb{R}^2$ with spoofing signal (Equation~\eqref{eqn:spoofmeasurement}) instead of the true measurement $z_t\in \mathbb{R}^2$ without spoofing signal (Equation~\eqref{eqn:realmeasurement}). The spoofing signal $\epsilon_t:=[\epsilon_{tx}, \epsilon_{ty}]^T\in \mathbb{R}^2$ adds additional measurement error: 
\begin{equation}
  \label{eqn:spoofmeasurement}
 \tilde{z}_t= z_t + \epsilon_t.
\end{equation}

\begin{figure}
  \centering
  \includegraphics[width=0.8\columnwidth]{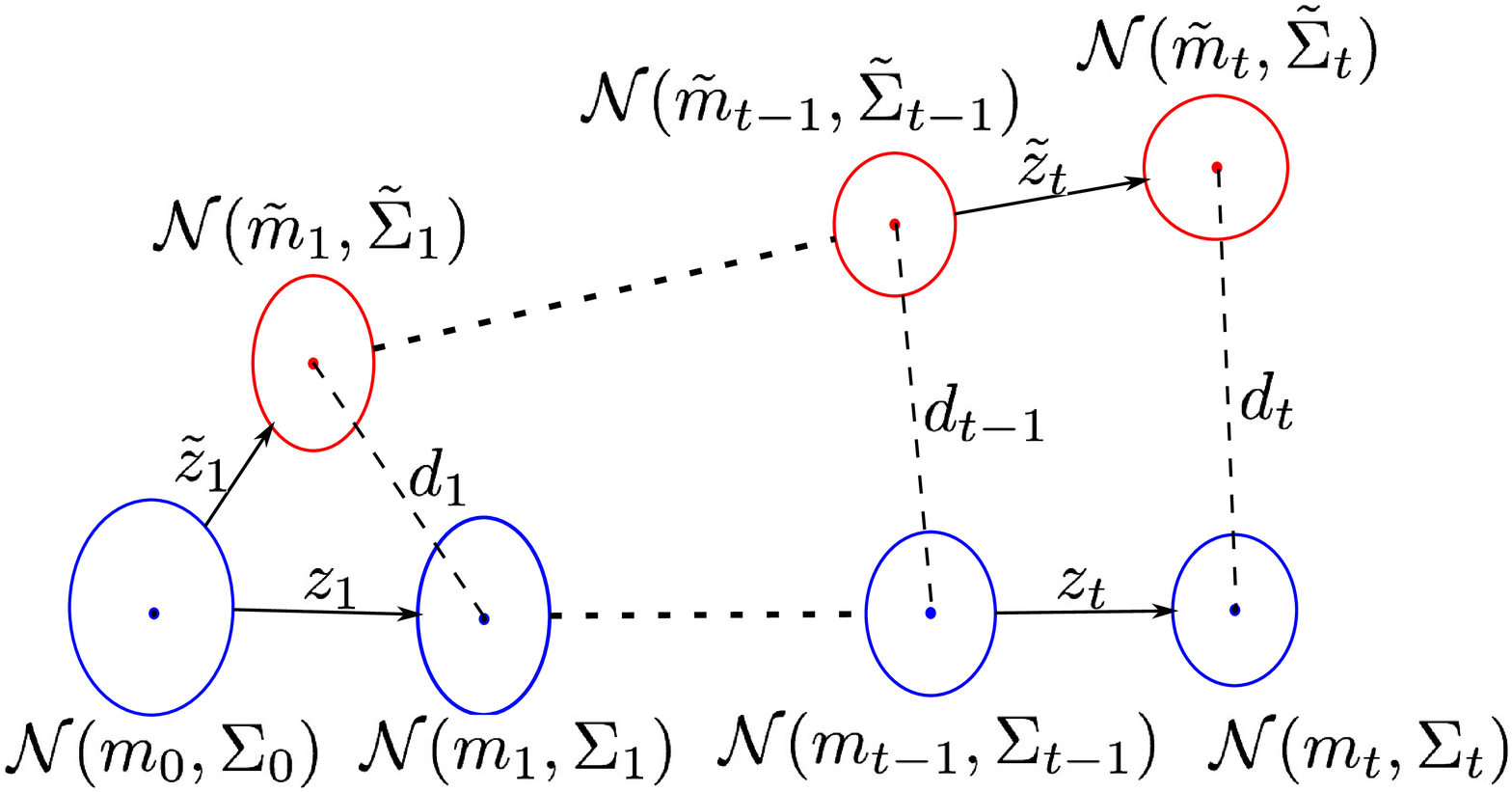}
  \caption{The evolution of KF estimate by applying $z_t$ and  $\tilde{z}_t$, respectively.}
  \label{Distance}
\end{figure}

The observer uses a KF to estimate target's position with initial distribution $\mathcal{N}(m_0, \Sigma_0)$. Since it receives the spoofing measurement $\tilde{z}_t$ for updating, we denote distributions generated by the evolution of its KF as $\mathcal{N}(\tilde{m}_t, \tilde{\Sigma}_t)$ when step $t\geq 1, t\in \mathbb{Z}^{+}$. We also denote the distributions generated by the evolution of a KF using true measurement $z_t$ as $\mathcal{N}({m}_t, {\Sigma}_t)$. The goal for the target is to set the separation between the mean estimate $m_t$ and $\tilde{m}_t$. The target's spoofing signal is each step within the planning horizon for which some desired separation, $d_t\geq 0$, must be achieved (Figure~\ref{Distance}). Figure~\ref{explain_distance} shows the target's spoofing process where it uses the initial guess of $\mathcal{N}(m_0, \Sigma_0)$ denoted as $\mathcal{N}(\tilde{m}_0, \tilde{\Sigma}_0)$ and desired separation $d_t$ to design spoofing signal $\epsilon_t$. In order to avoid detection, the targets seeks to minimize the magnitude of the spoofing signal.
\begin{figure}
  \centering
  \includegraphics[width=0.75\columnwidth]{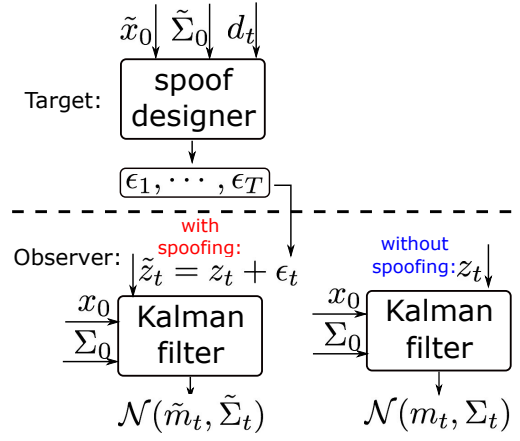}
  \caption{Signal spoofing process and its effect on the observer's KF estimation.}
  \label{explain_distance}
\end{figure}
We first propose two problems for offline scenarios as follows.
\subsection{Offline Spoofing Signal Design with Known  $\mathcal{N}(m_0, \Sigma_0)$}
If the target knows $\mathcal{N}(m_0, \Sigma_0)$ of the KF, then the target can set $\mathcal{N}(\tilde{m}_0, \tilde{\Sigma}_0)$ equal to $\mathcal{N}(m_0, \Sigma_0)$.  

\begin{problem} [Offline with Known $\mathcal{N}(m_0, \Sigma_0)$] \label{pro:problem1}
Consider a target with motion model (Equation~\eqref{eqn:target_motion_model}), measurement model (Equation~\eqref{eqn:realmeasurement}), and spoofing measurement model (Equation~\eqref{eqn:spoofmeasurement}). Assume target knows $\mathcal{N}(m_0, \Sigma_0)$. Find a sequence of spoofing signal inputs, $\{\epsilon_1, \epsilon_2,\cdots,  \epsilon_{T} \}$ to achieve desired separation $d_t$  between $\tilde{m}_t$ and $m_t$ at step $t$. Such that,
\begin{equation*}
  \label{eqn:obj_pro_1}
 \text{minimize}  \quad \sum^{T}_{t=1} \gamma_t \cdot \lVert \epsilon_t\lVert_p^p
\end{equation*}
subject to,
\begin{equation} \label{pro:problem1stconstraint}
\begin{split}
\quad \lVert m_t -& \tilde{m}_t\lVert_{p}^{p} \ge d_{t}^{p},\quad \forall t
      \end{split}
\end{equation}
where $\gamma_t\in\mathbb{R}^{+}$ is a weighing parameter and $T\in \mathbb{Z}^{+}$ is the optimization horizon. 

\end{problem}
\subsection{Offline Spoofing Signal Design with Unknown  $\mathcal{N}(m_0, \Sigma_0)$}
Next, we consider the case where the target does not know the initial condition in the KF. Instead, we assume that the initial estimate $\tilde{m}_0$ is not too far away from $m_0$ (in exception). 
\begin{problem}  [Offline with Unknown $\mathcal{N}(m_0, \Sigma_0)$] \label{pro:problem2}
Consider a target with motion model (Equation~\eqref{eqn:target_motion_model}), measurement model (Equation~\eqref{eqn:realmeasurement}), and spoofing measurement model (Equation~\eqref{eqn:spoofmeasurement}). Assume the target starts spoofing with $\tilde{m}_0$, where $ \mathbb{E}( m_0 - \tilde{m}_0) = M_0$  and $\tilde{\Sigma}_0 \neq {\Sigma}_0$. Find a sequence of spoofing signal inputs, $\{\epsilon_1, \epsilon_2,\cdots,  \epsilon_{T} \}$ to achieve desired separation $d_t$  between $\tilde{m}_t$ and $m_t$ (in expectation) at step $t$. Such that
\begin{equation*}
  \label{eq5}
 \text{minimize}   \quad \sum^{T}_{t=1} \gamma_t \cdot \lVert \epsilon_t\lVert_p^p 
\end{equation*}
subject to,
\begin{equation} \label{problem2st} 
\begin{split}
\quad \lVert \mathbb{E}(m_t -& \tilde{m}_t)\lVert_{p}^{p} \ge d_{t}^{p}, \quad \forall t
\end{split}
\end{equation}
where $\gamma_t\in\mathbb{R}^{+}$ is a weighing parameters and $T\in\mathbb{Z}^{+}$ is the optimization horizon.
\end{problem}
\section{Signal Spoofing Strategies} \label{sec:Strategies}
In this section, we show how to solve Problems~\ref{pro:problem1} and~\ref{pro:problem2} when $p=1$ and $p=2$. We first present the relationship between the separation $m_t - \tilde{m}_t$  and the initial bias $m_0 - \tilde{m}_0$.
\begin{theorem} \label{mk} 
Consider a target with motion model (Equation~\eqref{eqn:target_motion_model}), measurement model (Equation~\eqref{eqn:realmeasurement}), and spoofing measurement model  (Equation~\eqref{eqn:spoofmeasurement}). The evolutions of the KFs by applying $z_t$ and $\tilde{z}_t$ give the distributions $\mathcal{N}(m_t, \Sigma_t)$ and $\mathcal{N}(\tilde{m}_t, \tilde{\Sigma_t})$, respectively. The difference, $m_t  - \tilde{m}_t$ is,
\begin{equation} \label{mt}
\begin{split}
m_t  - \tilde{m}_t = &\prod_{i=0}^{t-1}A_{t-i}\cdot (m_0  - \tilde{m}_0) + \\
&\sum_{i=0}^{t-2} \left( \prod_{j=0}^{i}A_{t-j}  (B_{t-1-i} + C_{t-1-i}) \right) +B_t + C_t,
\end{split}
\end{equation}
where $A_t = \mathcal{F} - \tilde{K}_t \mathcal{H}\mathcal{F},\quad B_t = (K_t-\tilde{K}_t)\left[z_t-\mathcal{H}(\mathcal{F}m_{t-1}+\mathcal{G}u_{t-1})\right], \quad
C_t= - \tilde{K}_t \epsilon_t$. 
\end{theorem}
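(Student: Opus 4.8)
The plan is to reduce the closed-form identity to a simple first-order linear recursion in the error $e_t := m_t - \tilde{m}_t$, and then unroll that recursion by induction on $t$. Concretely, I will establish that
\begin{equation*}
e_t = A_t\, e_{t-1} + B_t + C_t,
\end{equation*}
with $A_t, B_t, C_t$ exactly as defined in the statement; the displayed formula is then simply the explicit solution of this recursion with initial condition $e_0 = m_0 - \tilde{m}_0$.

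First I would write down the standard Kalman filter mean update for each of the two filters. Writing $\mu_{t-1} := \mathcal{F}m_{t-1} + \mathcal{G}u_{t-1}$ and $\tilde{\mu}_{t-1} := \mathcal{F}\tilde{m}_{t-1} + \mathcal{G}u_{t-1}$ for the two predicted means, the updates read $m_t = \mu_{t-1} + K_t(z_t - \mathcal{H}\mu_{t-1})$ and $\tilde{m}_t = \tilde{\mu}_{t-1} + \tilde{K}_t(\tilde{z}_t - \mathcal{H}\tilde{\mu}_{t-1})$, where $\tilde{z}_t = z_t + \epsilon_t$. A key point here is that the gains $K_t$ and $\tilde{K}_t$ may differ, since the covariance recursions (and hence the gains) depend only on the model matrices and the initial covariances $\Sigma_0, \tilde{\Sigma}_0$, not on the measurements; they may therefore be treated as fixed, known matrices throughout. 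Subtracting the two updates and using $\mu_{t-1} - \tilde{\mu}_{t-1} = \mathcal{F}e_{t-1}$ expresses $e_t$ in terms of $e_{t-1}$ together with the innovation and spoofing terms.

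The heart of the computation is an add-and-subtract step on the gain terms: I would add and subtract $\tilde{K}_t(z_t - \mathcal{H}\mu_{t-1})$ so as to split off the contribution $(K_t - \tilde{K}_t)(z_t - \mathcal{H}\mu_{t-1})$, which is precisely $B_t$, from the remaining $\tilde{K}_t$-terms. The leftover $\tilde{K}_t$-terms contribute $-\tilde{K}_t\mathcal{H}\mathcal{F}\,e_{t-1}$ and $-\tilde{K}_t\epsilon_t$; the former combines with the $\mathcal{F}e_{t-1}$ coming from $\mu_{t-1}-\tilde{\mu}_{t-1}$ to yield $(\mathcal{F} - \tilde{K}_t\mathcal{H}\mathcal{F})\,e_{t-1} = A_t e_{t-1}$, while the latter is exactly $C_t$. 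This produces the recursion above.

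Finally I would unroll $e_t = A_t e_{t-1} + B_t + C_t$ by induction: repeated substitution gives $e_t = (A_t A_{t-1}\cdots A_1)\,e_0 + \sum_{k=1}^{t-1}(A_t\cdots A_{k+1})(B_k + C_k) + (B_t + C_t)$, and the change of index $k = t-1-i$ matches the claimed products $\prod_{i=0}^{t-1}A_{t-i}$ and $\prod_{j=0}^{i}A_{t-j}$ and the ranges of the sum. I expect the only genuine difficulty to be bookkeeping, namely keeping the non-commuting matrix products in their correct left-to-right order and aligning the product and summation index ranges precisely with the stated expression; the underlying algebra is routine once the add-and-subtract decomposition is in place.
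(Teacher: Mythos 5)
Your proposal is correct and follows essentially the same route as the paper: both derive the one-step recursion $m_t - \tilde{m}_t = A_t(m_{t-1}-\tilde{m}_{t-1}) + B_t + C_t$ via the add-and-subtract of $\tilde{K}_t(z_t - \mathcal{H}(\mathcal{F}m_{t-1}+\mathcal{G}u_{t-1}))$, and then unroll it. Your write-up is in fact slightly cleaner than the paper's, which contains a typo (missing tildes on $K_t$ and $m_{t-1}$) in its displayed expression for $\tilde{m}_t$ before the subtraction step.
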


The proof is given in the appendix. 

\begin{corollary} \label{corr:expected}
The expected value of the separation is, 
\begin{equation}
\begin{split} \label{mt_expectation}
&  \mathbb{E}\left(m_t-\tilde{m}_t\right)  \\
&=  \prod_{i=0}^{t-1}A_{t-i} M_0 +\sum_{i=0}^{t-2} \left(\prod_{j=0}^{i}A_{t-j} C_{t-1-i}\right) +  C_t.
\end{split}
\end{equation}
\end{corollary}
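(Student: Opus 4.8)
The plan is to apply the expectation operator to both sides of Equation~\eqref{mt} and exploit linearity of expectation together with the fact that every factor except the innovation term is deterministic. First I would observe that the gain matrices $K_t$ and $\tilde{K}_t$—and hence $A_t = \mathcal{F} - \tilde{K}_t\mathcal{H}\mathcal{F}$—are nonrandom, since the Kalman gains are produced by the deterministic Riccati recursion depending only on $\mathcal{F}, \mathcal{H}, Q, R$ and the deterministic covariance iterates. Likewise $C_t = -\tilde{K}_t\epsilon_t$ is deterministic, because the spoofing sequence $\{\epsilon_t\}$ is a design variable rather than a random quantity. Consequently the products $\prod_{i} A_{t-i}$, every term $C_{t-1-i}$, and the standalone $C_t$ pass through $\mathbb{E}(\cdot)$ unchanged, while $\mathbb{E}(m_0 - \tilde{m}_0)$ becomes $M_0$ by the hypothesis of Problem~\ref{pro:problem2}.

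The only stochastic contributions come from the terms $B_\tau = (K_\tau - \tilde{K}_\tau)\,\nu_\tau$, where $\nu_\tau := z_\tau - \mathcal{H}(\mathcal{F}m_{\tau-1} + \mathcal{G}u_{\tau-1})$ is the innovation of the unspoofed Kalman filter. Since the scalar-matrix factor $K_\tau - \tilde{K}_\tau$ is deterministic, it suffices to establish $\mathbb{E}(\nu_\tau) = 0$ for every $\tau$, so that $\mathbb{E}(B_\tau) = 0$. This kills each $B$-term appearing in the summation as well as the trailing $B_t$, and what remains on the right-hand side is exactly Equation~\eqref{mt_expectation}.

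The crux is therefore showing $\mathbb{E}(\nu_\tau)=0$, which I would argue via the standard unbiasedness property of the Kalman filter. Assuming the base case $\mathbb{E}(m_0)=\mathbb{E}(x_0)$, one shows inductively that the measurement-update equation preserves $\mathbb{E}(m_{\tau-1})=\mathbb{E}(x_{\tau-1})$; the one-step predictor then satisfies $\mathbb{E}(\mathcal{F}m_{\tau-1}+\mathcal{G}u_{\tau-1}) = \mathcal{F}\mathbb{E}(x_{\tau-1})+\mathcal{G}u_{\tau-1} = \mathbb{E}(x_\tau)$, using that $\omega_{\tau-1}$ is zero mean. Taking expectations in $z_\tau = \mathcal{H}x_\tau + v_\tau$ with $v_\tau$ zero mean gives $\mathbb{E}(z_\tau) = \mathcal{H}\mathbb{E}(x_\tau) = \mathcal{H}\mathbb{E}(\mathcal{F}m_{\tau-1}+\mathcal{G}u_{\tau-1})$, i.e. $\mathbb{E}(\nu_\tau)=0$. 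I expect the main obstacle to be this inductive unbiasedness argument—in particular, making the base case explicit, since the conclusion implicitly requires $\mathbb{E}(m_0)=\mathbb{E}(x_0)$ for the true filter, an assumption that is standard in the Kalman setting but worth stating. Everything else is bookkeeping: linearity of expectation and pulling the deterministic matrices outside $\mathbb{E}(\cdot)$.
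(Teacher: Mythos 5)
Your proof is correct and follows essentially the same route as the paper: take expectations in Equation~\eqref{mt}, note that $A_t$, $\tilde{K}_t$, and the $C$-terms are deterministic, and eliminate the $B$-terms by showing the innovation has zero mean. You are in fact more careful than the paper at the one nontrivial step---the paper simply writes $z_i=\mathcal{H}(\mathcal{F}m_{i-1}+\mathcal{G}u_{i-1}+w_i)+v_i$, silently identifying $m_{i-1}$ with the true state, whereas you correctly reduce $\mathbb{E}(\nu_\tau)=0$ to the filter's unbiasedness $\mathbb{E}(m_{\tau-1})=\mathbb{E}(x_{\tau-1})$ and flag the implicit base-case assumption $\mathbb{E}(m_0)=\mathbb{E}(x_0)$.
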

\begin{proof}
From Equation~\ref{mt}, $\mathbb{E}\left(m_t-\tilde{m}_t\right)$ follows,
\begin{equation*} 
\begin{split}
&  \mathbb{E}\left(m_t-\tilde{m}_t\right)   \\
&=  \mathbb{E} \left(\sum_{i=0}^{t-2}  \prod_{j=0}^{i}A_{t-j}\cdot B_{t-1-i}    +  B_t \right) + \\
&\quad\prod_{i=0}^{t-1}A_{t-i} \mathbb{E} (m_0  - \tilde{m}_0) +\sum_{i=0}^{t-2} \left( \prod_{j=0}^{i}A_{t-j} C_{t-1-i}\right) +  \tilde{K}_t\epsilon_t.
\end{split}
\end{equation*}
The actual measurement is: $z_i = \mathcal{H}(\mathcal{F}m_{i-1} +\mathcal{G}u_{i-1}+w_i)+v_i$, where $w_i$ and $v_i$ are Gaussian noises with zero mean. The expected measurement value is: $\mathbb{E}(z_i)= \mathcal{H}( \mathcal{F}m_{i-1}+ \mathcal{G}u_{i-1})$ for all $i$, thus $\mathbb{E}[z_i- \mathcal{H}( \mathcal{F}m_{i-1}+ \mathcal{G}u_{i-1})]=0$.  Since $\mathbb{E}[B_i] = 0$, we have, 
\begin{equation}
\begin{split} 
&  \mathbb{E}\left(m_t-\tilde{m}_t\right)  \\
&=  \prod_{i=0}^{t-1}A_{t-i}\mathbb{E} (m_0  - \tilde{m}_0) +\sum_{i=0}^{t-2} \left(\prod_{j=0}^{i}A_{t-j} \tilde{K}_{t-1-i} \epsilon_{t-1-i}\right)\\
&\quad +  \tilde{K}_t\epsilon_t.
\end{split}
\end{equation}
Since we assume $ \mathbb{E}( m_0 - \tilde{m}_0) = M_0$ in Problem~\ref{pro:problem2}, the claim is guaranteed.
\end{proof}


Theorem \ref{mk} shows the difference between the two estimated means at step $t$ depends on the initial means, $m_0$ and $\tilde{m}_0$, and the initial covariance matrices $\Sigma_0$ and $\tilde{\Sigma}_0$. This is because the Kalman gain $K_t$ depends on the covariance matrix $\Sigma_t$. If target sets $m_0=\tilde{m}_0$ and $\Sigma_0 = \tilde{\Sigma}_0$, it has $\Sigma_t = \tilde{\Sigma}_t$ for all $t$ since the covariance matrix is updated through the same Kalman prediction and update equation (see appendix). Thus, $B_t = 0_{2\times 2}$ and then Equation \eqref{mt} can be simplified as:
$$  m_t  - \tilde{m}_t =\sum_{i=0}^{t-2} \left(\prod_{j=0}^{i}A_{t-j}  C_{t-1-i}\right) + C_t. $$
As a result, $m_t  - \tilde{m}_t $ is independent of the measurements $\{z_1, z_2, \cdots, z_t\}$ when $m_0=\tilde{m}_0$ and $\Sigma_0 = \tilde{\Sigma}_0$. Thus, the target can generate spoofing signal inputs by solving Problem \ref{pro:problem1} offline. Similarly, following Corollary~\ref{corr:expected}, Problem~\ref{pro:problem2} can be saved offline as well.

 Problems~\ref{pro:problem1} and~\ref{pro:problem2} are two nonlinear programming problems for arbitrary vector norms $L_p$. However, 
when $p=1$, they can be formulated as linear programming problems. Linear programming can be solved in polynomial time \cite{karmarkar1984new}. When $p=2$, they become QCQP (Quadratically Constrained Quadratic Program). The following shows the LP and QCQP formulations.
\begin{theorem}
If $p=1$ and the elements in $ \mathcal{F}$ and $I-K_t \mathcal{H}$ are all positive, then Problems~\ref{pro:problem1} and \ref{pro:problem2} can be solved optimally with linear programming. If $p=1$ and the elements in $ \mathcal{F}$ and $I-K_t \mathcal{H}$ are not all positive, then Problems~\ref{pro:problem1} and \ref{pro:problem2} can be solved optimally with $4^k$ linear programming instances. If $p=2$ and $\{ \mathcal{H}, \mathcal{F},Q,R\}$ are diagonal matrices, then Problems~\ref{pro:problem1} and \ref{pro:problem2} can be solved optimally with linear programming. 
\end{theorem}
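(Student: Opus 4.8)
The plan is to reduce both Problem~\ref{pro:problem1} and Problem~\ref{pro:problem2} to a single common template — minimize an $\ell_p^p$ cost of the spoofing sequence subject to a \emph{lower} bound on the $\ell_p^p$ norm of an affine function of that sequence — and then exploit the sign and diagonal structure of the coefficient matrices to linearize it. First I would make the dependence on the decision variables explicit. By Theorem~\ref{mk}, with the simplification $m_0=\tilde m_0$, $\Sigma_0=\tilde\Sigma_0$ (so $K_t=\tilde K_t$ and the stochastic terms $B_t$ vanish, as noted after the theorem), the separation obeys $s_t := m_t-\tilde m_t = A_t s_{t-1} - K_t\epsilon_t$ with $A_t=(I-K_t\mathcal H)\mathcal F$ and $s_0=0$; unrolling gives $s_t=-\sum_{\tau=1}^{t}\big(\prod_{l=\tau+1}^{t}A_l\big)K_\tau\,\epsilon_\tau$, which is an affine (here linear) function of $\{\epsilon_1,\dots,\epsilon_t\}$ matching Equation~\eqref{mt}. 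For Problem~\ref{pro:problem2} the same manipulation applied to $\mathbb E(m_t-\tilde m_t)$ in Corollary~\ref{corr:expected} gives the identical linear map plus the constant offset $\big(\prod_{i=0}^{t-1}A_{t-i}\big)M_0$. Thus in every case the constraint is $\|s_t\|_p^p\ge d_t^p$ with $s_t$ affine in the variables, while the objective $\sum_t\gamma_t\|\epsilon_t\|_p^p$ is a convex, $\ell_p^p$-separable function of the variables, so the two problems are handled uniformly.

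Next I treat $p=1$. The objective $\sum_t\gamma_t(|\epsilon_{tx}|+|\epsilon_{ty}|)$ is made linear by the standard epigraph trick ($|\epsilon_{tx}|\le u_{tx}$, $-u_{tx}\le\epsilon_{tx}$, etc.). The obstacle is the reverse-convex constraint $\|s_t\|_1\ge d_t$, whose feasible set is the complement of an $\ell_1$-ball and hence non-convex. I would resolve it with the planar identity $|a|+|b|=\max(|a+b|,\,|a-b|)$, which shows that $\|s_t\|_1\ge d_t$ holds iff at least one of the four linear inequalities $\pm(s_{tx}+s_{ty})\ge d_t$ or $\pm(s_{tx}-s_{ty})\ge d_t$ holds; i.e.\ the feasible region is \emph{exactly} a union of four half-spaces per constraint. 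If the entries of $\mathcal F$ and $I-K_t\mathcal H$ are all positive, then every $A_l$ and every partial product is entrywise positive, so the coefficient matrices carry a fixed sign pattern with no cancellation between contributions; consequently the orthant in which an optimal $s_t$ lies is determined a priori, the absolute values collapse to one linear form, and the whole program becomes a single linear program. Without positivity the active half-space is unknown, so I would branch: enumerate the four choices at each of the $k$ sign-ambiguous constraints, solve the linear program in each of the $4^{k}$ branches, and return the cheapest feasible solution. Optimality is preserved because the branch polyhedra exhaust the non-convex feasible set.

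Finally, for $p=2$ with $\mathcal H,\mathcal F,Q,R$ diagonal, I would first show the Kalman recursion preserves diagonality: if $\Sigma_{t}$ is diagonal then so are the predicted covariance, $K_t$, $I-K_t\mathcal H$, and $A_t$, so the filter decouples into two independent scalar channels and $s_{tx}$ depends only on $\{\epsilon_{\cdot x}\}$ while $s_{ty}$ depends only on $\{\epsilon_{\cdot y}\}$. The objective and the affine maps then separate across $x$ and $y$. The remaining difficulty is the reverse-convex quadratic constraint $s_{tx}^2+s_{ty}^2\ge d_t^2$, whose feasible set is the complement of a disk; unlike the $\ell_1$ case it is \emph{not} a finite union of half-spaces, which is precisely why the diagonal hypothesis is needed. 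I would linearize it by fixing, for each constraint, a unit supporting direction $a_t$ and imposing the sufficient linear (reverse) condition $a_t^{\top}s_t\ge d_t$, since $\|s_t\|_2\ge a_t^{\top}s_t\ge d_t$; the diagonal decoupling lets me identify the direction along which the minimum-energy spoof aligns, so the relaxation is tight and a single program results after the sign of each $s_t$-coordinate is fixed by branching.

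The main obstacle I anticipate is the non-convexity of the separation constraints: the entire technique rests on replacing each $\ge$ constraint by a finite family of linear pieces and proving the replacement is \emph{exact}, so that the minimum over branches is the global optimum, together with showing that the positivity hypothesis removes the branching by fixing all signs. For $p=2$ there is an additional subtlety, namely reconciling the quadratic cost and the curved (disk-complement) constraint with a purely linear program; I expect this to hinge on the diagonal decoupling reducing each branch to a scalar subproblem whose optimizer is characterized by the active linear constraint, and this is the step I would scrutinize most carefully.
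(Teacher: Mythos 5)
Your treatment of the $p=1$ case is correct and essentially the route the paper takes: its Lemma~\ref{lemma_linear} likewise observes that positivity of the entries of $\mathcal F$ and $I-K_t\mathcal H$ forces all coefficients in $\lVert m_t-\tilde m_t\rVert_1$ to be positive, so that an optimal solution can be taken with all $\epsilon_{ix},\epsilon_{iy}$ of one sign and a single LP suffices; without positivity it enumerates the four sign patterns of the two rows of each of the $k$ active constraints, giving $4^k$ LP instances. Your identity $|a|+|b|=\max(|a+b|,|a-b|)$ is a slightly cleaner way of exhibiting the feasible set as an exact union of four half-spaces per constraint, but it is the same decomposition, and your reduction of Problem~\ref{pro:problem2} to the same affine template via Corollary~\ref{corr:expected} matches the paper.

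The $p=2$ case is where your argument has a genuine gap. Replacing $s_{tx}^2+s_{ty}^2\ge d_t^2$ by $a_t^{\top}s_t\ge d_t$ for a chosen unit direction $a_t$ is only a \emph{sufficient} condition (an inner approximation of the complement of the disk), and your claim that the diagonal decoupling identifies the direction along which the minimum-energy spoof aligns, making the relaxation tight, is asserted rather than proved; with several separation constraints imposed at different times $t$, the optimal $s_t$ need not align with any direction fixable in advance, and nothing in your sketch pins it down. The paper's mechanism is different and sidesteps this entirely: when the system matrices are diagonal, every $A_t$ and $\tilde K_t$, and hence every block of $D_t$, is diagonal, so after the change of variables $y_{tx}=\epsilon_{tx}^2\ge 0$, $y_{ty}=\epsilon_{ty}^2\ge 0$ both the objective $\sum_t\gamma_t(\epsilon_{tx}^2+\epsilon_{ty}^2)$ and each quadratic constraint become \emph{linear} in $y$ (no cross terms survive), and the reverse-convex quadratic constraint turns into a linear inequality handled exactly as in the $p=1$ case. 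That change of variables is the missing idea; without it, or some other provably exact reformulation, your $p=2$ argument does not establish optimality.
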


\subsection{Linear Programming Formulation  for $L_1$ Vector Norm}
\label{sec:$L_1$ norm}
Here, we show how to formulate Problem~\ref{pro:problem1} using linear programming. A similar procedure can be applied to formulate Problem~\ref{pro:problem2} as linear programming.

The constraint in Problem~\ref{pro:problem1} (Equation~\ref{pro:problem1stconstraint}) follows:
\begin{align}\label{replacePro1}
&\lVert m_t-\tilde{m}_t\lVert_1  =\left\lVert\sum_{i=0}^{t-2} \left(\prod_{j=0}^{i}A_{t-j}  C_{t-1-i}\right) + C_t \right\lVert_1\nonumber\\
&= \left\lVert \sum_{i=0}^{t-2} \left( \prod_{j=0}^{i}A_{j+1}\cdot \tilde{K}_{t-1-i}  \cdot \epsilon_{t-1-i}\right) + \tilde{K}_t \epsilon_t \right\lVert_1\nonumber\\
&\geq d_t,
\end{align}
where $t=1,2,\cdots,T$. $\prod_{j=0}^{i}A_{t-j}\cdot \tilde{K}_i\in \mathbb{R}^{2\times 2}$ is a constant matrix for each $i\in\{1,\cdots,t-1\}$ and is calculated from the KF iteration with initial covariance $\Sigma_0$ and $\tilde{\Sigma}_0$. 
Since $L_{1}$ vector norm is the sum of the absolute values of the elements for a given vector, Problem~\ref{pro:problem1} can be directly formulated as a linear programming problem when $p=1$. 

Then we show how to transform this constraint to a standard linear constraint form $G_t x_t \ge d_t$ with $x_t:=[\epsilon_{1x},\cdots,\epsilon_{tx}, \epsilon_{1y},\cdots,\epsilon_{ty}]^{T}$. The left side of Equation \eqref{replacePro1} can be formulated as
\begin{equation}
\begin{split}
\lVert m_t - \tilde{m}_t\lVert_1=
\bigg\lVert                 
  \begin{array}{cc} 
   a_0+ a_{1} \epsilon_{1x} +\cdots a_{t}\epsilon_{tx} +\cdots+ a_{2t}\epsilon_{ty}  \\
   b_0+ b_{1} \epsilon_{1x} +\cdots b_{t}\epsilon_{tx} + \cdots + b_{2t}\epsilon_{ty} \\
  \end{array}
\bigg\lVert_1 
\label{eqn:specific_constraint_example}
\end{split}
\end{equation}
where $a_0,a_{1},\cdots,a_{2t},b_0,b_{1},\cdots,b_{2t}$ are corresponding coefficients from Equation~\ref{mt}.
\begin{lemma}
\label{lemma_linear}
If the elements in matrices $ \mathcal{F}$ and $I - K_t \mathcal{H}$ are positive, then  $\lVert m_t - \tilde{m}_t\lVert_1$ is a linear combination of $|\epsilon_{ix} |$ and $|\epsilon_{iy}|$, and Problem~\ref{pro:problem1} can be solved as a single LP instance.
\end{lemma}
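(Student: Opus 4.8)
The plan is to begin from the simplified form of $m_t - \tilde m_t$ that holds under the Problem~\ref{pro:problem1} assumptions $m_0 = \tilde m_0$ and $\Sigma_0 = \tilde\Sigma_0$, which force $K_t = \tilde K_t$ and $B_t = 0$. By Theorem~\ref{mk} the difference is then the purely linear function of the spoofing inputs displayed in Equation~\eqref{replacePro1}, with \emph{zero} constant term (i.e. $a_0 = b_0 = 0$ in Equation~\eqref{eqn:specific_constraint_example}), and the coefficient of each $\epsilon_{t-1-i}$ is the $2\times 2$ matrix $\prod_{j=0}^{i} A_{j+1}\,\tilde K_{t-1-i}$. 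The first thing I would establish is that these coefficient matrices are entrywise of one fixed sign: since $A_t = (I - K_t\mathcal{H})\mathcal{F}$ is a product of two entrywise-positive matrices under the hypothesis, each $A_t$ is entrywise positive, and products of entrywise-positive matrices remain entrywise positive; together with the sign of the trailing gain factor this fixes the sign of every coefficient $a_k, b_k$ in Equation~\eqref{eqn:specific_constraint_example}.

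The substantive part is handling the separation constraint $\|m_t - \tilde m_t\|_1 \ge d_t$, which is a \emph{lower} bound on a norm and therefore reverse-convex; in general it forces a case split over the signs of the two vector components at every step, producing the $4^k$ LP instances mentioned in the theorem. My key step is to show that positivity collapses all these cases into one. Note first that the objective $\sum_t \gamma_t(|\epsilon_{tx}| + |\epsilon_{ty}|)$ depends only on the magnitudes $|\epsilon_{ix}|, |\epsilon_{iy}|$, so flipping the sign of any input leaves the cost unchanged. Next, by the triangle inequality, for fixed magnitudes the two terms $|\sum_k a_k\epsilon_k|$ and $|\sum_k b_k\epsilon_k|$ are \emph{each} maximized simultaneously when every input is taken with the same sign as its (positive) coefficient, i.e. all $\epsilon_{ix}, \epsilon_{iy}\ge 0$; because every $a_k$ and every $b_k$ is positive, this single sign choice maximizes the left-hand side of \emph{all} the constraints $t=1,\dots,T$ at once. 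Hence from any feasible point one obtains an all-nonnegative point of equal cost that satisfies every constraint with at least as much slack, so it is without loss of generality to restrict to $\epsilon \ge 0$.

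On the nonnegative orthant the absolute values open up: $|\sum_k a_k\epsilon_k| = \sum_k a_k\epsilon_k = \sum_k a_k|\epsilon_k|$ and likewise for the $b_k$, so $\|m_t - \tilde m_t\|_1 = \sum_k (a_k + b_k)\,|\epsilon_k|$ is a genuine nonnegative linear combination of $|\epsilon_{ix}|$ and $|\epsilon_{iy}|$, which proves the first assertion. Each separation constraint then becomes a single linear inequality $G_t x_t \ge d_t$ in $x_t = [\epsilon_{1x},\dots,\epsilon_{tx},\epsilon_{1y},\dots,\epsilon_{ty}]^{T}\ge 0$, and with the linear objective the whole of Problem~\ref{pro:problem1} is one LP, giving the second assertion. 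The obstacle I anticipate is the positivity bookkeeping of the first paragraph: the lemma's hypotheses constrain $\mathcal{F}$ and $I - K_t\mathcal{H}$ but not the gain $K_t$ directly, yet the coefficient $\prod_j A_{j+1}\,\tilde K_{t-1-i}$ carries a trailing gain factor whose entrywise sign must be pinned down (and reconciled with the $-\tilde K_t$ in $C_t$). I would resolve this by expressing $K_t$ through $I - K_t\mathcal{H}$ and $\mathcal{H}$, or by appealing to the decoupled/diagonal structure used elsewhere in the paper, to certify that the full coefficients share the uniform sign on which the triangle-inequality argument relies.
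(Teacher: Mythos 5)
Your proposal follows essentially the same route as the paper's proof: positivity of $\mathcal{F}$ and $I-K_t\mathcal{H}$ makes every coefficient matrix $\prod_{j}A_{j+1}\tilde K_{t-1-i}$ sign-uniform, so an optimal solution exists with all $\epsilon_{ix},\epsilon_{iy}\ge 0$, the absolute values open up into a linear combination, and a single LP suffices; you merely make explicit the triangle-inequality and constraint-slack argument that the paper compresses into the one-line claim that a sign-uniform optimum always exists. The ``obstacle'' you honestly flag---that the lemma's hypotheses control $\mathcal{F}$ and $I-K_t\mathcal{H}$ but not the entrywise sign of the trailing gain factor $\tilde K_{t-1-i}$ (equivalently, of $C_t=-\tilde K_t\epsilon_t$)---is a genuine loose end, but it is equally unaddressed in the paper's own proof, which simply asserts that all coefficients $a_k,b_k$ are positive under the stated hypotheses.
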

\begin{proof}
According to the proof of Theorem~\ref{mk} appendix, all the coefficients $\{a_1,...,a_{2t}, b_1,...,b_{2t}\}$ are positive if the elements in matrices $ \mathcal{F}$ and $I - K_t \mathcal{H}$ are positive. Therefore, the objective function and the constraints are linear in $|\epsilon_{ix} |$ and $|\epsilon_{iy}|$. There always exists an optimal solution where all $\epsilon_{ix} \geq 0$ and $\epsilon_{iy}\geq 0$ or where all $\epsilon_{ix} \leq 0$ and $\epsilon_{iy}\leq 0$. The objective function in both cases will be the same. Without loss of generality, we can assume $\epsilon_{ix} \geq 0$ and $\epsilon_{iy}\geq 0$, which can be solved using a single LP instance.
\end{proof}

The linear programming strategy  containing $k$ constraints is presented in Algorithm~\ref{alg:linear_programming}. $G$ denotes matrix in the linear constraint $G x \geq D_k$ where $x:=[\epsilon_{1x},\cdots,\epsilon_{Tx}, \epsilon_{1y},\cdots,\epsilon_{Ty}]^{T}$ and $D_k$ is the collection of $k$ nonzero separations $d_t, ~t\in\{1,\cdots, T\}$.
\begin{algorithm}\label{alg:linear_programming}
    \SetAlgoLined
$\mathbf{Initial}\leftarrow\left\{(x_o, \Sigma_0, \mathcal{F},\mathcal{H},  \mathcal{G}, Q, R, u \right\}$\\  
$G\leftarrow 0_{k\times 2T}$ \\
Calculate Kalman gain $\tilde{K}_1,\cdots,\tilde{K}_T$ \\
\For{$i=1$ to the $q_{th}$ value in $D_k$}{
$g = \prod_{j=i}^{T-1}A_{j+1} \tilde{K}_i$\label{algline:g}\;
$G_{q,i} = \text{sum  of  all  rows in}$ $g$
}
$\mathbf{Return} \quad G$
    \caption{Linear Programming Formulation}
    \label{minimax} 
\end{algorithm}

If Lemma \ref{lemma_linear} does not  hold, it is possible that some elements in $a_0,a_{1},\cdots,a_{2t},b_0,b_{1},\cdots,b_{2t}$ can be positive and some are negative. In general,  there are four different cases depending on the sign of the first row and the second row for considering each constraint $\lVert m_t - \tilde{m}_t\lVert_1 \geq d_t$ (Equation~\ref{eqn:specific_constraint_example}). Then we can obtain four linear optimization problems along four different sub-constraints of each constraint $\lVert m_t - \tilde{m}_t\lVert_1 \geq d_t$. Thus, in the worst case, the optimal solution can be obtained by solving $4^k$ linear optimization problems. We run Algorithm~\ref{alg:linear_programming} $4^k$ times by changing the sign of rows in $g$ (Line~\ref{algline:g}) appropriately.

\subsection{Quadratically Constrained Quadratic Program Formulation for $L_2$ Vector Norm}
When $p=2$, Problems~\ref{pro:problem1} and~\ref{pro:problem2} can be formulated as  QCQPs\cite{boyd2004convex}:

\begin{equation}
\begin{split}
&\text{minimize} \quad  \frac{1}{2}x_{\epsilon}^{T}P_0 x_\epsilon \\
&\text{s.t.}\quad -\frac{1}{2} x^T_\epsilon D^T_t D_t x_\epsilon + d_{t}^{2} \le 0 ,\quad\forall t\in\{1,\dots,T\}\\
\label{eqn:QCQP_forthree}
\end{split}
\end{equation}
where $x_{\epsilon}=[\epsilon^2_{1x},\epsilon^2_{1y},\cdots,\epsilon^2_{Tx},\epsilon^2_{Ty}]^{T}$, $P_0=I_{2T}$, and 

$D_t\in \mathbb{R}^{2T \times 2T} :=$ 
\[
  \begin{bmatrix}
    \prod_{j=1}^{t-1}A_{j+1} \tilde{K}_0 & &\cdots &0 & 0 &0\\
    \vdots& \ddots &\vdots &\vdots &\vdots &\vdots\\
    0& \cdots& \prod_{j=t-1}^{t-1}A_{j+1} \tilde{K}_{t-1} &0 &0 &0\\
    0& \cdots& 0&  \tilde{K}_t & 0 & 0\\
    0& \cdots& 0&  0 & 0 & 0\\
    0& \cdots& 0&  0 & 0 & \ddots
  \end{bmatrix}
\]

Unfortunately, the QCQP formulations for these three problems are NP-hard since the constraint in each problem is concave. If $ \mathcal{F}, \mathcal{G}, \mathcal{H},\tilde{\Sigma_0}$ are diagonal matrices, it can be shown that $D_t$ is also a diagonal matrix. We can transform the QCQP formulation to a linear programming problem by using change of variables $\{\epsilon^2_{tx},\epsilon^2_{ty}\}, ~t=\{1,2,...,T\}$, and using  a procedure similar to $p=1$.

If $D_t$ is not a diagonal matrix, one solution is to apply the inequality $\sqrt[]{2} \lVert x \lVert_2 \ge  \lVert x \lVert_1$ between $L_1$ vector norm and $L_2$ vector norm. The constraint can be changed to  $L_1$ vector norm, which is a stricter constraint. A  sub-optimal solution can be obtained by using the $L_1$ vector norm. 
\subsection{Receding Horizon: Spoofing with online measurement}\label{pro:problem3}
 Problems \ref{pro:problem1} and \ref{pro:problem2} describe the offline versions for spoofing. We also extend the offline problems to an online version. The following formulates an online spoofing  scenario.
 
Consider a target with motion model (Equation~\eqref{eqn:target_motion_model}), measurement model (Equation~\eqref{eqn:realmeasurement}), and spoofing measurement model (Equation~\eqref{eqn:spoofmeasurement}). Assume the target does not know $\mathcal{N}({x}_0, {\Sigma}_0)$. It collects a series of measurements $\{z^{real}_1, z^{real}_2, \cdots, z^{real}_{t^o} \}$ from step $1$ to current step $t^{o}$. Find a sequence of spoofing signal inputs, $\{\epsilon_{t^{o}}, \epsilon_{t^{o}+1},\cdots,  \epsilon_{t^{o}+H} \}$ to achieve desired separation $d_t$  between $\tilde{m}_t$ and $m_t$ (in expectation) within future $H$ steps. Such that
\begin{equation*}
  \label{eq5}
  \text{minimize}   \quad \sum^{t^o+H}_{t=t^{0}} \gamma_t \cdot \lVert \epsilon_t\lVert_p^p
\end{equation*}
\begin{equation} \label{pro:pro3_constrain}
\begin{split}
\text{s.t.} \quad \lVert \mathbb{E}(m_t -& \tilde{m}_t)\lVert_{p}^{p} \ge d_{t}^{p} ,\quad \forall t\in\{t^o,\cdots,t^o+H\}
\end{split}
\end{equation}
where $\gamma_t\in\mathbb{R}^{+}$ is a weighing parameter, $t^o$ is the current time, and $H$ is the predictive time horizon. The target applies $\epsilon_{t}=\epsilon_{t^o}$ as spoofing signal input at each step $t$.

\section{Simulations} \label{sec:simulation}
In this section, we simulate the effectiveness of spoofing strategies for Problems~\ref{pro:problem1},~\ref{pro:problem2} and online case (Section~\ref{pro:problem3}) where a target designs spoofing signals $\epsilon_t$ to  mislead an observer by achieving the desired separations $d_t$ between $m_t$ and $\tilde{m}_t$. Our code  is available online.\footnote{\url{https://github.com/raaslab/signal_spoofing.git}}

We consider the $L_1$  vector norm and the following models,
$$ \mathcal{F}=I_{2\times2},  \mathcal{G}=I_{2\times2}, u= \left[\begin{matrix}
  1  \\
   1  \\
  \end{matrix} \right], 
  R =0.5I_{2\times2}, 
    Q =0.5I_{2\times2}.\\$$
Set the weight $\gamma_t=1$ for all $t$. 
 
For Problem~\ref{pro:problem1}, set the initial condition for the KF as,  
$$\Sigma_0 =I_{2\times2},
     ~m_0=\left[\begin{matrix}
  0  \quad  0  
    \end{matrix}\right]^T.
     $$
Since the target knows $\mathcal{N}({x}_0, {\Sigma}_0)$, it sets $\tilde{m}_0=m_0$ and $\tilde{\Sigma}_0 = \Sigma_0$. We first consider a scenario where the target wants to achieve the desired separation at steps, $t=5, 10, 15$, denoted as $d_5=1.77$, $d_{10}=3.54$ and $d_{15}=5.30$ with the optimization horizon $T=20$. The target generates  a sequence of spoofing signals $\{\epsilon_1,\cdots,\epsilon_{20}\}$ offline by using a linear programming solver. The spoofing performance is shown in Figure~\ref{Simulation_problem1}-(a) where the true separations are the same as the desired separations. Same successful spoofing achieved when the desired separations are chosen as $d_t = 0.25 \sqrt{2} t,~t=\{3,...,15\}$, as shown in Figure~\ref{Simulation_problem1}-(b). 

\begin{figure} 
\centering{
\subfigure[Desired separations, $d_5 = 1.77$ and $d_{10}=3.54$, with $T=20$.]{\includegraphics[width=0.8\columnwidth]{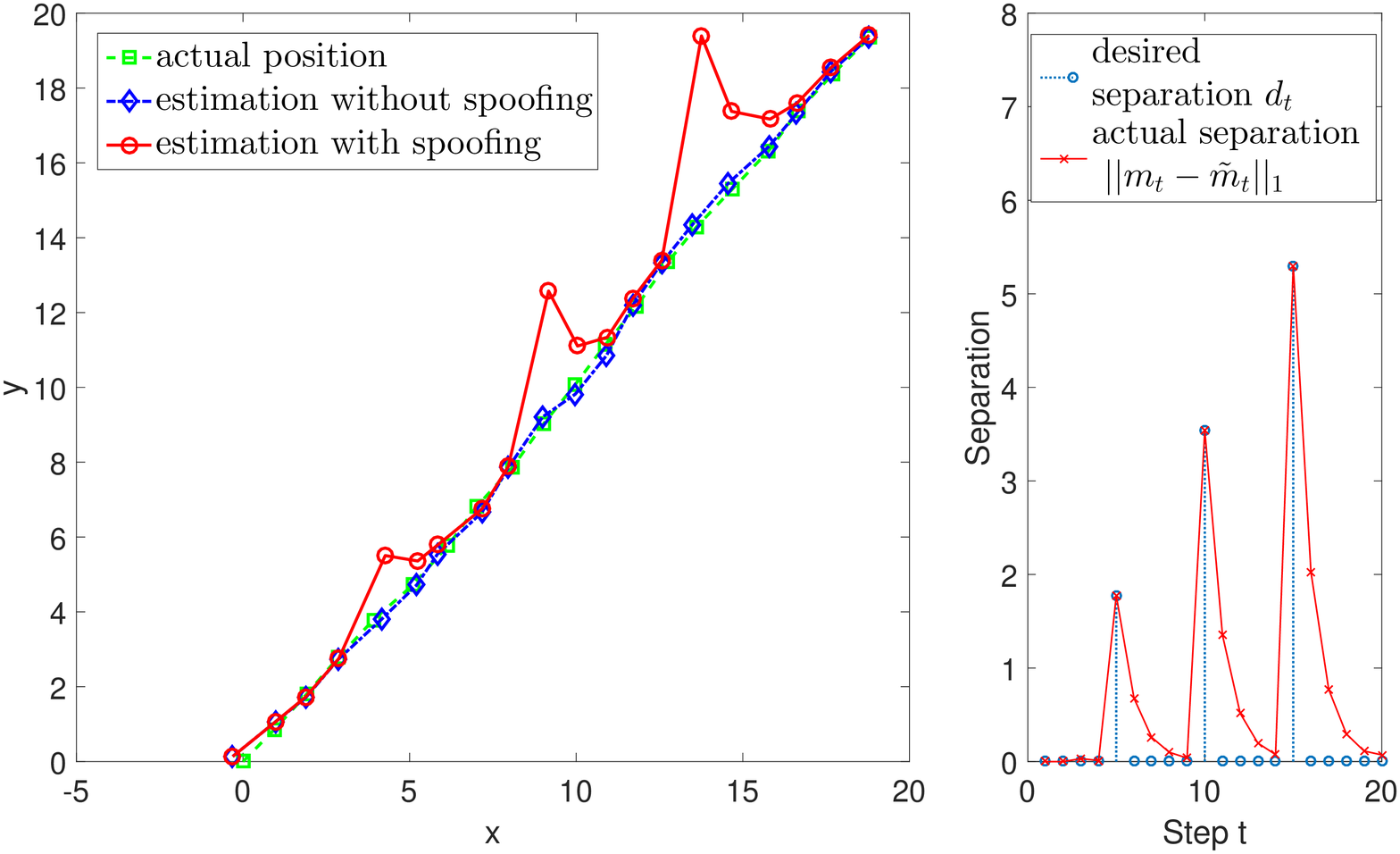}}
\subfigure[Desired separations, $d_t = 0.25 \sqrt{2}t$, with $t=3$ to $T=15$.]{\includegraphics[width=0.8\columnwidth]{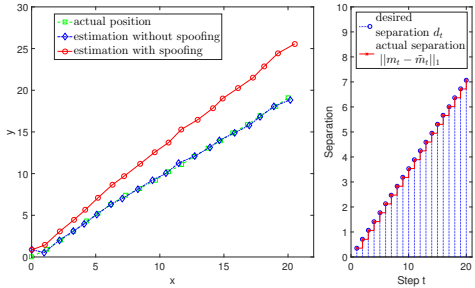}}
}
\caption{Offline signal spoofing with known ($m_0, \Sigma_0$).}
\label{Simulation_problem1}         
\end{figure}
In Problem \ref{pro:problem2}, the target knows $ \mathbb{E}(m_0 - \tilde{m}_0) = M_0$ but does not know $\Sigma_0$. The spoofing result is no longer deterministic but holds in expectation $\lVert\mathbb{E}( m_t - \tilde{m}_t)\lVert_1 \ge d_t$. Figure~\ref{Simulation_problem2}-(a) shows spoofing signals for desired separations as $d_1=2$ with  $T=6$ and $M_0=1$. Set $\mathcal{N}(\tilde{m}_0,\tilde{\Sigma}_0)$ as $\mathcal{N}(0,1.5I_{2})$, $m_0$ as a random variable ($m_0 \sim \mathcal{N}(1,1)$) and ${\Sigma}_0 = I_{2}$. In order to see the effectives of the spoofing signals $\{\epsilon_1,\cdots,\epsilon_5\}$, we conduct 100 trials for each desired separation $d_2\in\{1,2,3,4,5\}$. Figure~\ref{Simulation_problem2}-(b) shows the $\lVert m_1 - \tilde{m}_1 \lVert_1$ is no longer deterministic, but  $\lVert \mathbb{E}(m_1 - \tilde{m}_1) \lVert_1 $ is  close to the desired value $d_1=2$. 
\begin{figure} 
\centering
\subfigure[Desired separation, $d_1 = 2$.]{\includegraphics[width=0.8\columnwidth]{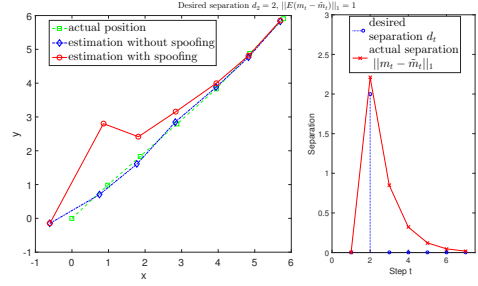}}
\subfigure[Results with $d_1$ = $\{1,2,3,4,5\}$ for 100 trials.]{\includegraphics[width=0.7\columnwidth]{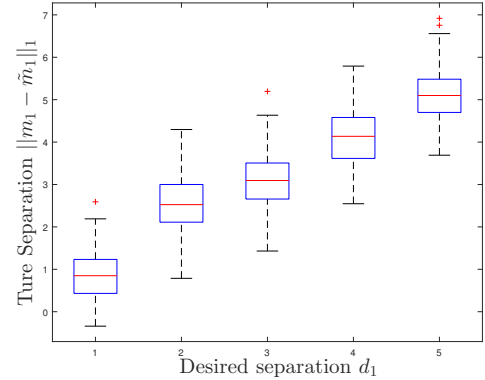}}
\caption{Offline signal spoofing with unknown ($m_0, \Sigma_0$).}
\label{Simulation_problem2}             
\end{figure}

For online case, spoofing signals are continuously generated by using receding horizon optimization with new noisy measurements. We set the receding horizon as $H=15$. Even though offline strategy performs comparatively as online strategy (Figure~\ref{Simulation_online}), online spoofing strategy achieves almost the same separation as the desired, while offline strategy has certain divergence (Figure~\ref{separation_online}). This is because online strategy can update the measurement at each step. Figure~\ref{online_sum} shows the  online strategy applies less total spoofing magnitude than offline strategy.
\begin{figure} 
\centering
{\includegraphics[width=0.7\columnwidth]{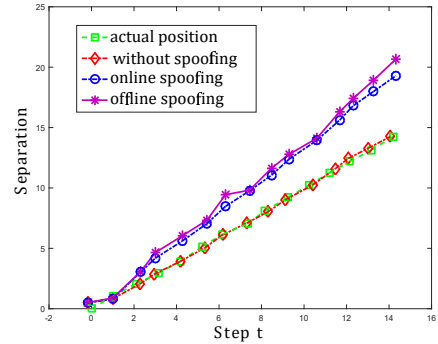}}
\caption{Online spoofing and offline spoofing with unknown ($m_0, \Sigma_0$).}
\label{Simulation_online}           
\end{figure}
\begin{figure}
\centering
{\includegraphics[width=0.7\columnwidth]{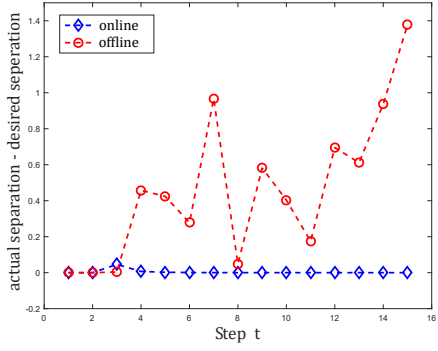}}
\caption{Divergence caused by online spoofing and offline spoofing. The blue line denotes $(\lVert \tilde{m}_t - m_t \lVert_1 - d_t)$ for online spoofing, and red line denotes $(\lVert \tilde{m}_t - m_t \lVert_1 - d_t)$ for offline spoofing.} 
\label{separation_online}
\end{figure}
\begin{figure}
\centering
{\includegraphics[width=0.7\columnwidth]{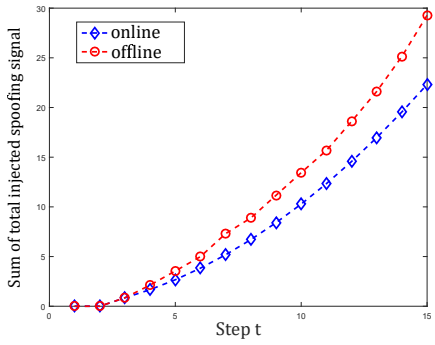}}
\caption{Total spoofing energy injected by online and offline strategies.}
\label{online_sum}
\end{figure}

\section{Signal spoofing with $\chi^2$ failure detector} \label{sec:detector}
In this section, we evaluate the performance of the false data injection strategy in the presence of a failure detector.

We assume a $\chi^2$ failure detector is used in our spoofing design system. A $\chi^2$ detector computes the following measure,
\begin{equation}
g_k = (z_k - \mathcal{H}m_k)^T\Sigma_k (z_k - \mathcal{H}m_k)
\end{equation}
where, $\Sigma_k$ is the covariance matrix, and $z_k - \mathcal{H}m_k$ is the residue of the Kalman filter~\cite{thrun2005probabilistic}. If $g_k>\text{threshold}$, the detector raises an alarm that the filter is under attack.

In general, if the threshold is too low, it may lead to false alarms. On the other hand, a higher threshold is not sensitive enough to detect a true spoofing attack. Therefore, to pick an appropriate value for the threshold, we measured the false alarm rate for a Kalman filter that is not under attack for various values of the threshold. Figure~\ref{False_ararm_rate} shows the false alarm rate for a Kalman filter not under attack. The false alarm rate decreases as the threshold increases. Lienhart et al.~\cite{lienhart2002extended} presented a false alarm detection method which achieves a 12.5\% false alarm rate. The nearest value in our case is a false alarm rate of $11.054\%$ which corresponds to the threshold value of $1.5$. Therefore, for the rest of the evaluation, we use the threshold to be equal to $1.5$.

\begin{figure}[H]
\centering
\includegraphics[width=0.64\columnwidth]{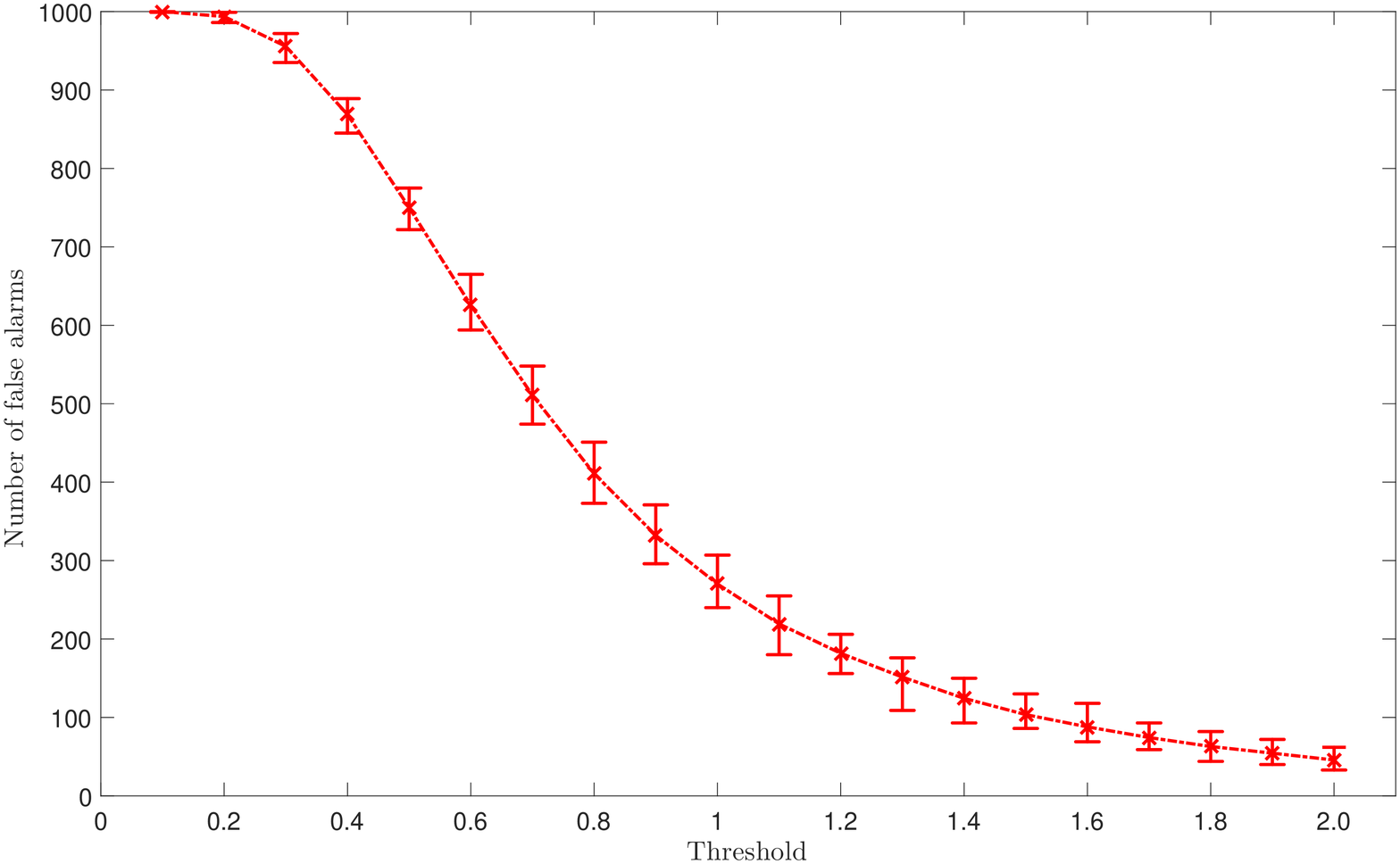}
\caption{False alarm rate. We did 100 simulations for a given value of the threshold. Each simulation consisted of 1000 trials. We plot the mean (red stars) as well as the minimium and maximum number of times the detector raised an alarm for the 100 simulations.}
\label{False_ararm_rate}         
\end{figure}

In the rest of this section, we consider a scenario where the target wants to achieve the desired separation of $d\ge 3$. We consider two strategies to achieve this separation. First, we consider a scenario where the attacker injects spoofing signals abruptly at a specific time instance. Then, we consider a scenario where the attacker gradually injects the spoofing signals to eventually achieve the desired separation. We evaluate the performance in the presence of the $\chi^2$ detector. 

\subsection{Abrupt Failure} 
A simple spoofing strategy is to pick a single time instance to inject the spoofing noise. Here, we choose to inject a spoofing signal to achieve $d_5 = 3$. We ran 1000 trials using our spoofing strategy. As expected, in all 1000 trials, the $\chi^2$ detector was able to detect the attack. One such trial is shown in Figure~\ref{Case1_detectof}. Here, the $\chi^2$ detector output $g_5$ is more than 10 times the threshold and is easily detected. This confirms the intuition that an abrupt failure is likely to be detected.

\begin{figure}[H]
\centering
\includegraphics[width=0.8\columnwidth]{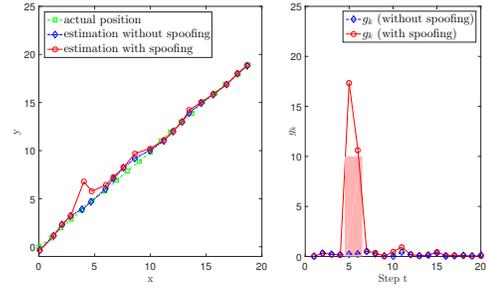}
\caption{Estimated positions and the $\chi^2$ detector's output, $g_k$, when the spoofing signals are injected abruptly by setting $d_5 = 3$. The shaded region indicates when the detector raises an alarm.}
\label{Case1_detectof}         
\end{figure}

\subsection{Gradually Increasing the Separation}
Instead of abruptly injecting spoofing signals, a better strategy is to gradually increase the separation. Here, we choose to gradually inject the spoofing signals over 15 steps. The desired separation is set to $d_t = 0.2 t$. At step $t=15$, we achieve $d_{15} = 3$.

\begin{figure}[H]
\centering
\includegraphics[width=0.8\columnwidth]{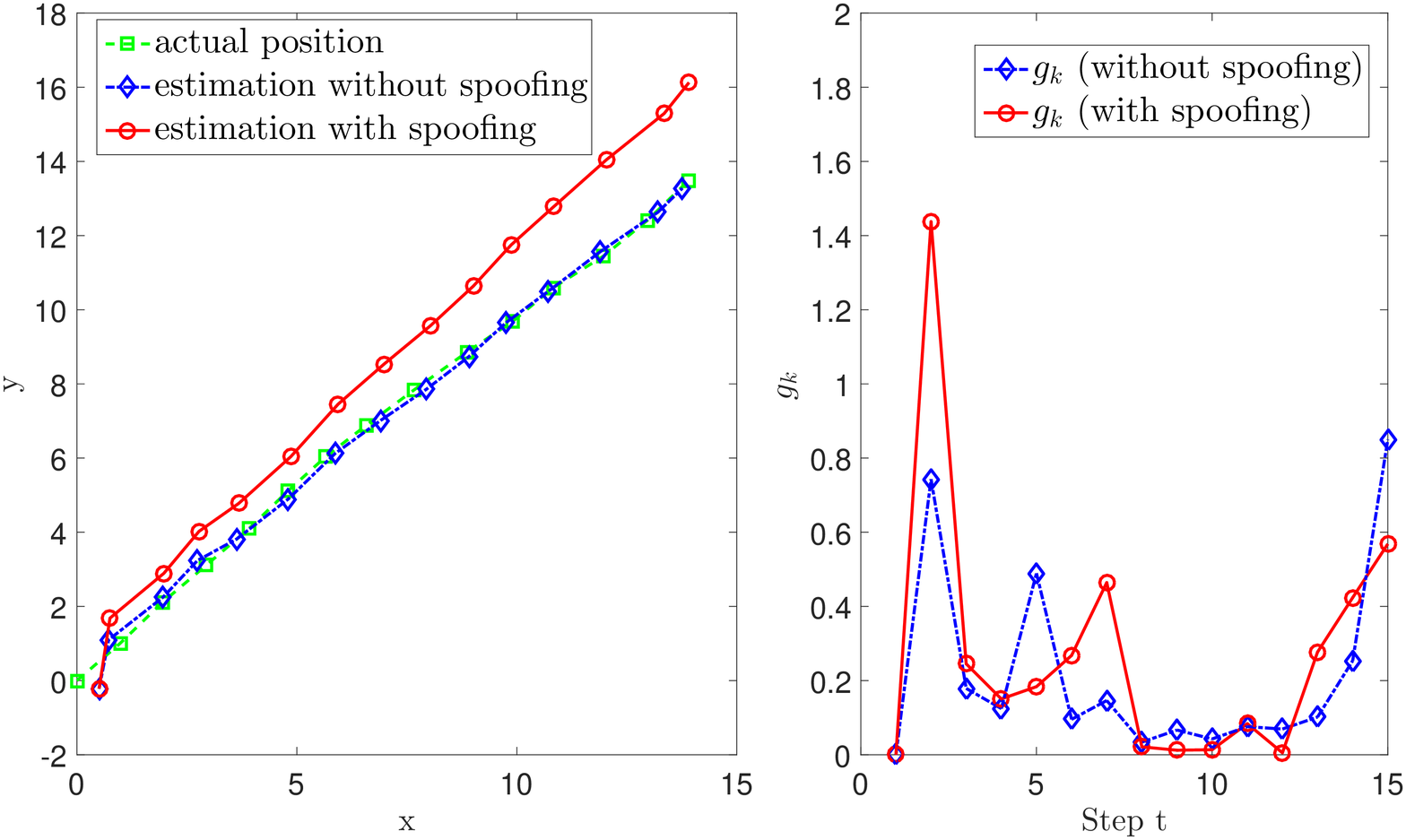}
\caption{Estimated positions and the $\chi^2$ detector's output, $g_k$, when the spoofing signals are injected abruptly by setting $d_t = 0.2 t$.}
\label{Case2_gradually15}         
\end{figure}

We ran 1000 trials using this strategy. Figure~\ref{Case2_gradually15} shows the result of one trial. The $\chi^2$ detector detected the attack in 267 trials. This corresponds to a $p$--value of $9.2133 \times 10^{-43}$. The $p$--value~\cite{chaubey1993resampling} is the probability of detecting an attack in 267 or more trials under the null hypothesis that the system is not under attack (when the false alarm rate is equal to $11.054\%$. Lower $p$--value implies an unlikely event suggesting that the null hypothesis is wrong. A $p$--value of $9.2133 \times 10^{-43}$ is low enough to reject the null hypothesis.

Next, we make the false data injection even more gradual. The desire separation is set as $d_t = 0.1t$. The final separation goal is still $d>3$ which is now achieved at $t=30$ instead of $t=15$.

\begin{figure}[H]
\centering
\includegraphics[width=0.8\columnwidth]{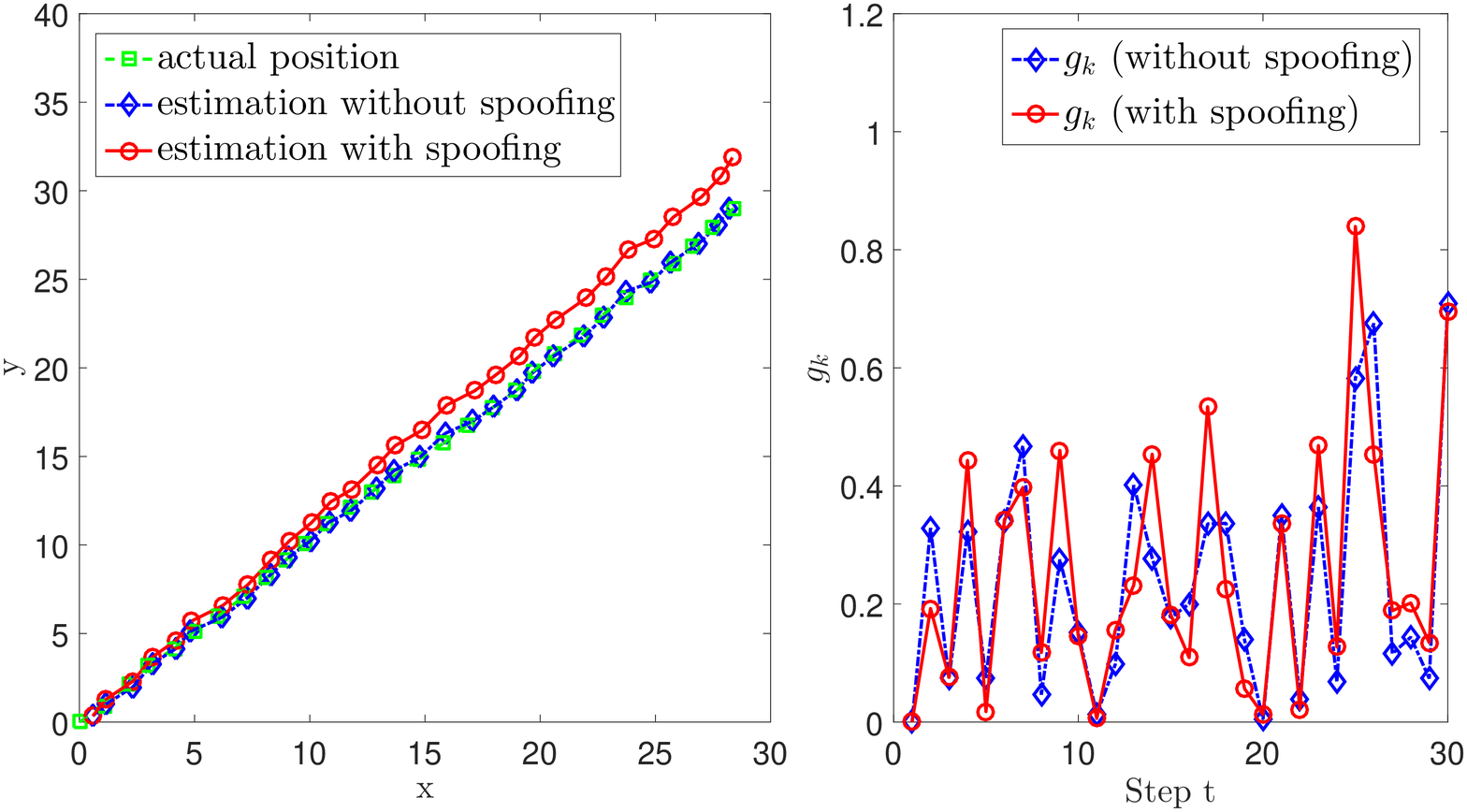}
\caption{Estimated positions and the $\chi^2$ detector's output, $g_k$, when the spoofing signals are injected abruptly by setting $d_t = 0.1 t$.}
\label{Case3_gradually30}         
\end{figure}

In 1000 trials, the $\chi^2$ detector detected an attack in 118 instances. This is close to the actual false alarm rate without any attack. In this scenario, the system will not be able to distinguish between false and true alarms. This is reflected in the $p$--value computation. 118 detections in 1000 trials correspond to a $p$--value of 0.2393. This $p$--value is large enough which incorrectly suggests that the null hypothesis (the system is not under attack) is true. Therefore, in this scenario, the signal spoofing strategy is able to successfully mislead the $\chi^2$ detector. 

As a result, we can see that the $\chi^2$ detector can detect failures in many cases when the signals are injected abruptly. However, using our spoofing strategy, the separation can be carefully and gradually designed so as to make it unlikely for the $\chi^2$ detector to distinguish between false alarms and true alarms. As long as the desired separation is made gradual, the probability of being detected will keep decreasing and eventually make it indistinguishable from false alarms.

\section{Conclusion} \label{sec:conc}

We study the problem of injecting spoofing signals to achieve a desired separation in the output of a Kalman filter without and with attack. We study many variants of the problem. Our main approach was to formulate the problems as nonlinear, constrained optimization problems in order to minimize the energy of the spoofing signal. We show that under some technical assumptions, the problems can be solved by linear programming optimally. We present a more computationally expensive approach to solve the problem, without the aforementioned assumptions. We also present numerical examples to show how this strategy can successfully mislead the $\chi^2$ failure detector. 

Our immediate future work is to study the game-theoretic aspects of the problem. In this work, we did not consider any active strategy being employed by the observer to mitigate the attack. In future works, we will consider the case of designing spoofing signals that explicitly take the attack mitigation strategies into account. In all the problems considered in this paper, the desired separations are taken as inputs provided by the user. The simulation results suggest that carefully choosing a specific profile of the desired separation can make it harder to detect by the observer. A possible extension is to automatically generate the optimal profile that not only minimizes the signal energy but also ensures that it is not detected by the observer. Another future work is to extend the strategy to more general non-linear state estimation approaches, such as the extended Kalman filter, unscented Kalman filter, and particle filters. 
\section*{APPENDIX}

\subsection{Proof of Theorem \ref{mk}}
\label{sec:proof_difference_mt}
Before we prove Theorem~\ref{mk}, we review the Kalman Filter update equations. Suppose the true measurement is $z_t$, the KF estimation is:
\begin{align}
\label{filter}
&{x}_{t|t-1} =  \mathcal{F}x_{t-1|t-1} +  \mathcal{G}u_t,\\
&{x}_{t|t}= \mathcal{F}{x}_{t|t-1}+K_t(z_t- \mathcal{H}{x}_{t|t-1}),
\end{align}
where $K_t$ is the Kalman gain and is given by:
\begin{equation}
\label{kalman_gain}
K_t = (\mathcal{F} {\Sigma}_{t|t-1} \mathcal{F}' + R_t)\mathcal{H}'(\mathcal{H}\Sigma_{t|t-1} \mathcal{H}' + Q_t )^{-1}.
\end{equation}

According to the Kalman gain update equation \eqref{kalman_gain}, the evolution covariance matrix at step $t$, $\Sigma_t$, only depends on the state model parameters and the initial condition of the covariance matrix $\Sigma_0$. The Kalman gain at step $t$, $K_t$ depends on the covariance matrix $\Sigma_t$. Both $\Sigma_t$ and $K_t$ do not depend on the control input series $\{u_t\}_{t=1,\cdots,k}$, measurement $\{z_t\}_{t=1,\cdots,k}$. Thus, the covariance matrix and the Kalman gain can be predicted from the KF covariance update steps.
\begin{equation} \label{Riccati}
\begin{split}
&{\Sigma}_{t+1|t}=\mathcal{F}{\Sigma}_{t|t}\mathcal{F}'+R_t,\\
&{\Sigma}_{t+1|t+1}= (I- K_t \mathcal{H}) {\Sigma}_{t+1|t}.
 \end{split}
\end{equation}

From Equation \eqref{Riccati}, the Kalman gain can be predicted from the initial condition $\Sigma_0$. 

We now prove our main result.

\begin{proof} 
From the update of  KF, we have
\begin{equation} \label{kalman_covariance}
\begin{split}
 m_t& = m_{t|t-1} + K_t(z_t-\mathcal{H}m_{t|t-1})\\
&=(I-K_t\mathcal{H})m_{t|t-1} +K_t z_t\\
&=(I- K_t\mathcal{H}) (\mathcal{F}m_{t-1}
+\mathcal{G}u_{t-1}) +K_t z_t.
\\
\end{split}
\end{equation}
and
$$\tilde{m}_t = (I- K_t\mathcal{H}) (\mathcal{F}m_{t-1} +\mathcal{G}u_{t-1}) +K_t (z_t + \epsilon_t).$$
Recursively,
\begin{equation}
\begin{split}
\label{Proposition2}
m_t & - \tilde{m}_t \\
=& (I- K_t\mathcal{H}) (\mathcal{F}m_{t-1} +\mathcal{G}u_{t-1}) +K_t z_t \\
&- [(I- \tilde{K}_t\mathcal{H}) (\mathcal{F}\tilde{m}_{t-1} +\mathcal{G}u_{t-1}) +\tilde{K}_t (z_t+\epsilon_t)]\\
=& (\mathcal{F} - K_t \mathcal{H} \mathcal{F})m_{t-1} - (\mathcal{F}- \tilde{K}_t \mathcal{H} \mathcal{F})\tilde{m}_{t-1}\\
&-(K_t - \tilde{K}_t)\mathcal{H} \mathcal{G} u_{t-1} + [K_t z_t - \tilde{K}_t(z_t+ \epsilon_t)]\\
=& (\mathcal{F} - \tilde{K}_t\mathcal{H}\mathcal{F})m_{t-1} - (\mathcal{F}- \tilde{K}_t\mathcal{H}\mathcal{F})m_{t-1}\\
&- (K_t - \tilde{K}_t) \mathcal{H} \mathcal{G} u_{t-1} +(K_t - \tilde{K}_t)z_t - \tilde{K}_t \epsilon_t\\
&-(K_t - \tilde{K}_t) \mathcal{H}\mathcal{F}m_{t-1}\\
=&(\mathcal{F} - \tilde{K}_t \mathcal{H}\mathcal{F})(m_{t-1} - \tilde{m}_{t-1})\\
&+(K_t - \tilde{K}_t)[z_t - \mathcal{H} (\mathcal{F}m_{t-1} + \mathcal{G}u_{t-1})] -\tilde{K}_t \epsilon_t.
\end{split}
\end{equation}

Define, $A_t = \mathcal{F} - \tilde{K}_t \mathcal{H}\mathcal{F}$, $B_t = (K_t-\tilde{K}_t)[z_t-\mathcal{H}(\mathcal{F}m_{t-1}+\mathcal{G}u_{t-1})]$ and $C_t = - \tilde{K}_t \epsilon_t.$
Then,
\begin{equation*}
\begin{split}
\label{Proposition2}
m_t  -& \tilde{m}_t \\
=& A_t (m_{t-1} - \tilde{m}_{t-1}) + B_t + C_t\\
=& A_t[A_{t-1}(m_{t-2} - \tilde{m}_{t-2}) + B_{t-1} + C_{t-1}] + B_t + C_t\\
\qquad \dots \\
=&\prod_{i=0}^{t-1}A_{t-i}\cdot (m_0  - \tilde{m}_0)+ (B_t + C_t)\\
&+ A_t (B_{t-1}+C_{t-1}) +  \cdots
+ A_t \cdots A_3 A_2 (B_1 + C_1)  \\
= &\prod_{i=0}^{t-1}A_{t-i}\cdot (m_0  - \tilde{m}_0) +B_t + C_t\\
&+\sum_{i=0}^{t-2} \left( \prod_{j=0}^{i}A_{t-j} (B_{t-1-i} + C_{t-1-i}) \right) .
\end{split}
\end{equation*}
\end{proof}

\bibliographystyle{IEEEtran}
\bibliography{IEEEabrv,main}
\begin{IEEEbiography}
[{\includegraphics[width=1in,height=1.25in,clip,keepaspectratio]{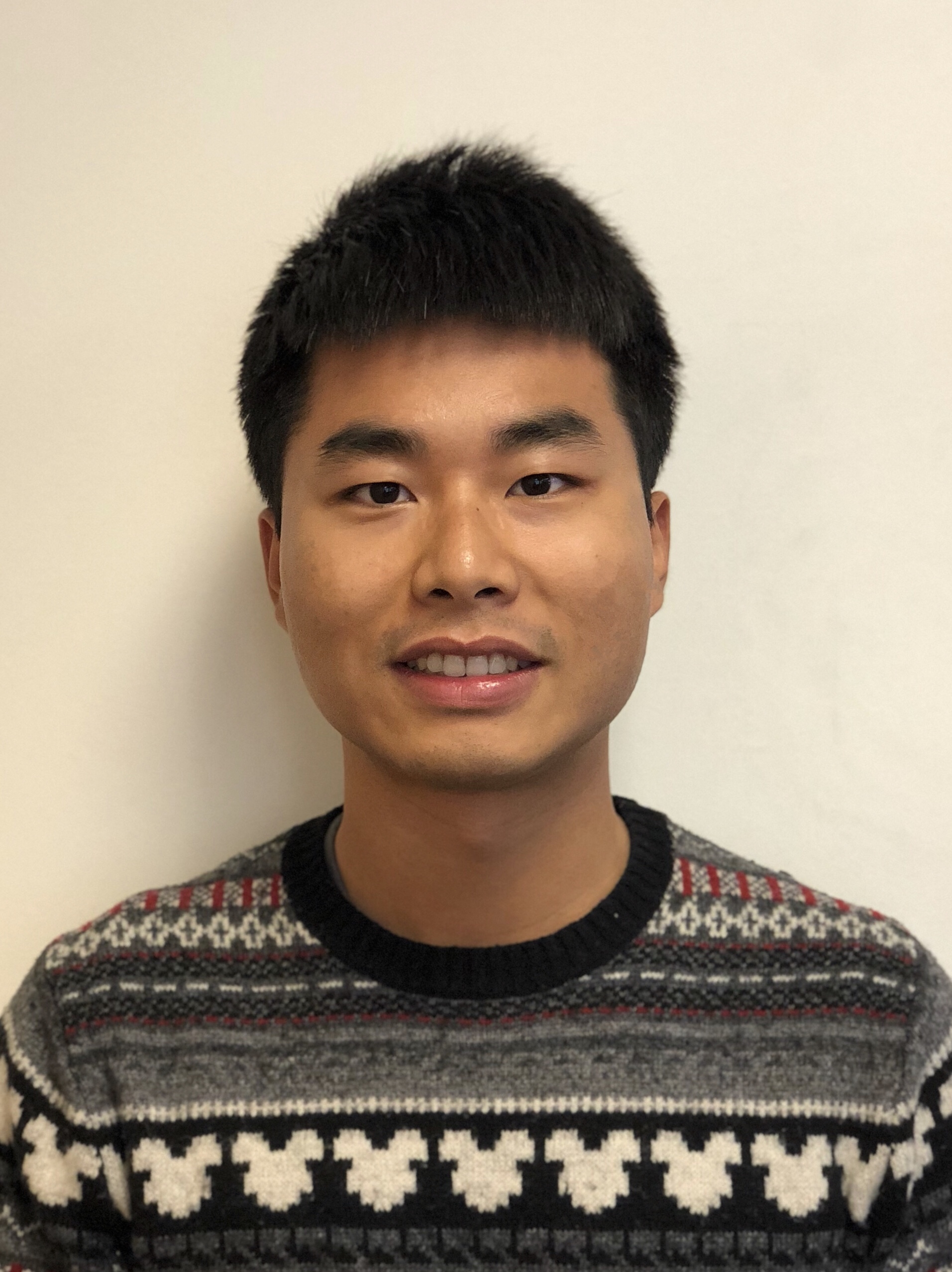}}]{Zhongshun Zhang}  received the B.S. degree in Electrical Engineering and Automation in 2012, and the M.Sc. degree in Control Engineering in 2015. Both from Southwest Jiaotong University, Chengdu, China. He is currently pursuing the Ph.D. degree in Electrical and Computer Engineering, Virginia Tech,  Blacksburg, VA, USA.

His research interests include pursuit-evasion games, state estimation, and target tracking algorithms.
\end{IEEEbiography}

\begin{IEEEbiography}
[{\includegraphics[width=1in,height=1.25in,clip,keepaspectratio]{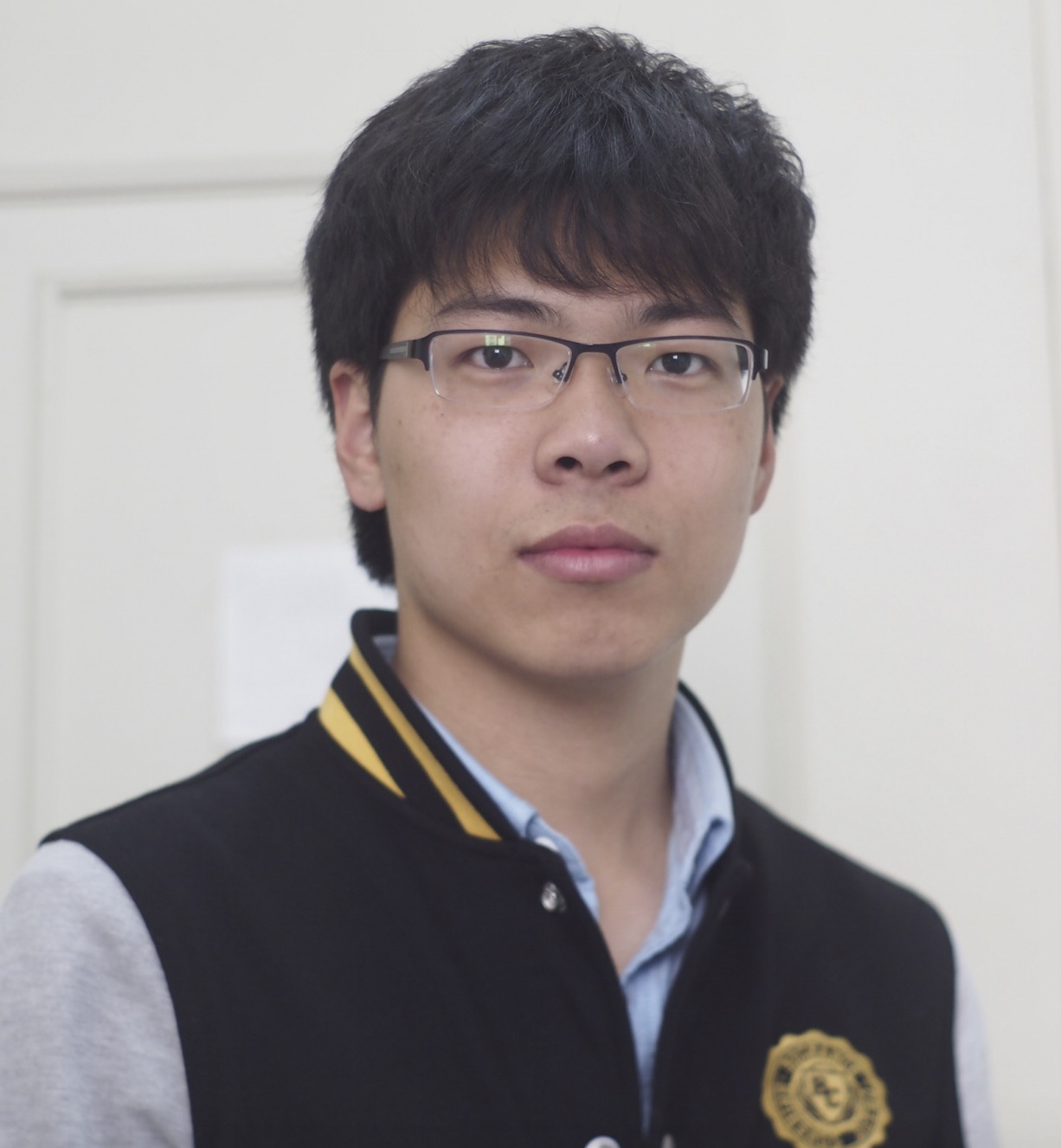}}]{Lifeng Zhou}  received the
B.S. degree in Automation from Huazhong University of Science and Technology, Wuhan, China, in 2013, the M.Sc.
degree in Automation from Shanghai Jiao Tong University, Shanghai, China, in 2016. He
is currently pursuing the Ph.D. degree in Electrical and Computer Engineering, Virginia Tech,
Blacksburg, VA, USA.

His research interests include multi-robot coordination, event-based control, sensor assignment, and risk-averse decision making.
\end{IEEEbiography}

\begin{IEEEbiography}[{\includegraphics[width=1in,height=1.25in,clip,keepaspectratio]{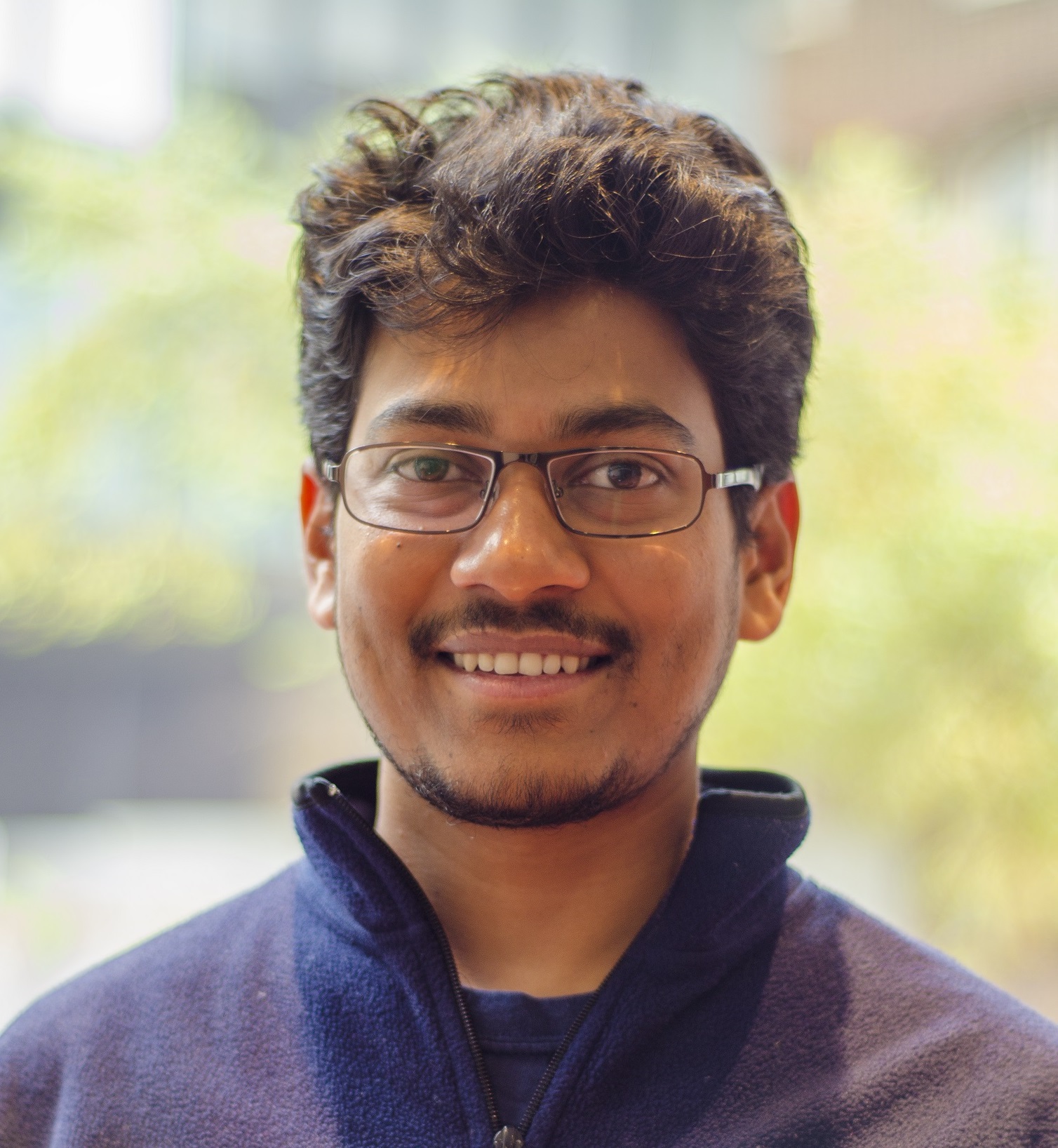}}]{Pratap Tokekar} is an Assistant Professor in the Department of Electrical and Computer Engineering at Virginia Tech. Previously, he was a Postdoctoral Researcher at the GRASP lab of University of Pennsylvania. He obtained his Ph.D. in Computer Science from the University of Minnesota in 2014 and Bachelor of Technology degree in Electronics and Telecommunication from College of Engineering Pune, India in 2008. He is a recipient of the NSF CISE Research Initiation Initiative award. His research interests include algorithmic and field robotics and  applications to precision agriculture and environmental monitoring.
\end{IEEEbiography}

\end{document}